\documentclass[11pt]{article}
\usepackage{times}
\usepackage{latexsym}
\usepackage{epsfig,enumerate,amsmath,amsfonts,amssymb,setspace}
\usepackage{indentfirst}
\usepackage{wrapfig,lipsum}
\usepackage{epsfig}
\usepackage{graphicx}
\usepackage{amsthm}
\usepackage{algorithm}
\usepackage{algorithmic}
\usepackage[T1]{fontenc}
\usepackage{authblk}
\usepackage[algo2e,ruled,vlined]{algorithm2e}
\setlength{\topmargin}{-1cm}
\setlength{\textwidth}{16cm}
\setlength{\oddsidemargin}{0 cm}
\setlength{\evensidemargin}{1cm}
\setlength{\textheight}{22cm}
\newtheorem{definition}{Definition}[section]
\newtheorem{theorem}{Theorem}[section]
\newtheorem{lemma}{Lemma}[section]

\newtheorem{claim}{Claim}[section]

\title{Complexity of Paired Domination in AT-free and Planar Graphs}
\author[1]{Vikash Tripathi\thanks{2017maz0005@iitrpr.ac.in}}
\author[2]{Ton Kloks}
\author[1]{Arti Pandey\thanks{arti@iitrpr.ac.in}}
\author[1]{Kaustav Paul\thanks{kaustav.20maz0010@iitrpr.ac.in}}
\author[3]{Hung-Lung Wang \thanks{anil@scs.carleton.ca}}
\affil[1]{Department of Mathematics, Indian Institute of Technology Ropar, Punjab, India.}
\affil[2]{klokston@gmail.com}
\affil[3]{Department of Computer Science and Information Engineering, National Taiwan Normal University, Taiwan}

\date{}
\begin{document}
\maketitle
\begin{abstract}
For a graph $G=(V,E)$, a subset $D$ of vertex set $V$, is a dominating set of $G$ if every vertex not in $D$ is adjacent to atleast one vertex of $D$. A dominating set $D$ of a graph $G$ with no isolated vertices is called a \emph{paired dominating set (PD-set)}, if $G[D]$, the subgraph induced by $D$ in $G$ has a perfect matching. The \textsc{Min-PD} problem requires to compute a PD-set of minimum cardinality. The decision version of the \textsc{Min-PD} problem remains NP-complete even when $G$ belongs to restricted graph classes such as bipartite graphs, chordal graphs etc.
On the positive side, the problem is efficiently solvable for many graph classes including intervals graphs, strongly chordal graphs, permutation graphs etc. In this paper, we study the complexity of the problem in AT-free graphs and planar graph. The class of AT-free graphs contains cocomparability graphs, permutation graphs, trapezoid graphs, and   interval graphs as subclasses. We propose a polynomial-time algorithm to compute a minimum PD-set in AT-free graphs. In addition, we also present a linear-time $2$-approximation algorithm for the problem in AT-free graphs. Further, we prove that the decision version of the problem is NP-complete for planar graphs, which answers an open question asked by Lin et al. (in Theor. Comput. Sci., $591 (2015): 99-105$ and Algorithmica, $ 82 (2020) :2809-2840$).\\

\noindent{ \bf Keywords:} Domination, Paired domination, Planar graphs, AT-free graphs, Graph algorithms, NP-completeness, Approximation algorithm.
\end{abstract}

\section{Introduction}\label{sec1}

Let $G=(V,E)$ be a graph. A vertex $v \in V$ is \emph{adjacent} to another vertex $u \in V$ if $uv$ is an edge of $G$. In this case, we say $u$, a neighbour of $v$. The set of all vertices adjacent to $v \in V$, denoted by $N_{G}(v)$, is known as \emph{open neighbourhood} of $v$, whereas the set $N_{G}[v]= N_{G}(v) \cup \{v\}$ is known as \emph{closed neighbourhood} of $v$ in $G$. 

In a graph $G=(V,E)$, a vertex $v \in V$ \emph{dominates} a vertex $u \in V$ if $u \in N_G[v]$. A subset $D$ of vertex set $V$, is a \emph{dominating set} of $G$ if every vertex of $V$ is dominated by at least one vertex of $D$. The \emph{domination number}, symbolized as $\gamma(G)$, is the minimum cardinality of a dominating set. The concept of domination has wide applications and is thoroughly studied by researchers in the literature. A survey of the results, both algorithmic as well as combinatorial, on domination can be found in \cite{hhs1,hhs2}.  Due to several applications in the real world problems, numerous variations of domination are introduced by imposing one or more additional condition on dominating set. Many of these variations are thoroughly studied by researchers in the literature. Total domination is one of the important variation of domination. For a graph $G=(V,E)$ without an isolated vertex, a \emph{total dominating set} of $G$ is a subset $D$ of vertex set such that every vertex of the graph is  adjacent to at least one vertex in $D$.


Paired domination is another important variation of domination, introduced by Haynes and Slater in \cite{paired}. A detailed survey of results on domination problem and its variations can also be found in a recent book by Haynes et al.~\cite{hhs3}. Given a graph $G=(V,E)$ with no isolated vertices, a subset $D$ of vertex set $V$, is a \emph{paired dominating set(PD-set)} if $D$ is a dominating set and the subgraph induced by $D$ in $G$ has a perfect matching. The \emph{paired domination number}, symbolized as $\gamma_{pr}(G)$, is the cardinality of a minimum PD-set of $G$. The \textsc{Min-PD} problem requires to compute a PD-set of a graph $G$ without an isolated vertex. More precisely, the \textsc{Min-PD} problem and its decision version of the same are defined as follows:

\medskip

\noindent\underline{\textsc{Min-PD} problem}
\\
[-12pt]
\begin{enumerate}
  \item[] \textbf{Instance}: A graph $G$ with no isolated vertices.
  \item[] \textbf{Solution}: A PD-set $D$.
  \item[] \textbf{Measure}: Size of $D$.
\end{enumerate}

\noindent\underline{\textsc{Decide PD-set} problem}
\\
[-12pt]
\begin{enumerate}
  \item[] \textbf{Instance}: A graph $G$ and an integer $k>0$, satisfying $k \le \vert V \vert$.
  \item[] \textbf{Query}: Is there is a PD-set $D$ of $G$, satisfying $\vert D\vert  \le k$?
\end{enumerate}

It is shown that the decision version of the problem is NP-complete for general graphs~\cite{paired}. Therefore, complexity of the problem is studied for several restricted graph classes. It is proven that, the decision version of the problem is NP-complete when restricted to special graph classes, including bipartite graphs~\cite{paired3}, perfect elimination bipartite graphs~\cite{paired5}, and split graphs~\cite{paired3}. But, on the good side, the problem is efficiently solvable in several important graph classes, including  permutation graphs~\cite{paired7}, interval graphs~\cite{paired3}, block graphs~\cite{paired3}, strongly chordal graphs~\cite{paired10},  circular-arc graphs~\cite{paired6} and some others. A detailed survey of the results on paired domination can be found in~\cite{paired4}. In Fig.~\ref{fig:1} we show the hierarchy of some important graph classes and the complexity status of the \textsc{Decide PD-set} problem in these graph classes.

\begin{figure}[h]
\begin{center}
\includegraphics[width=0.9 \textwidth]{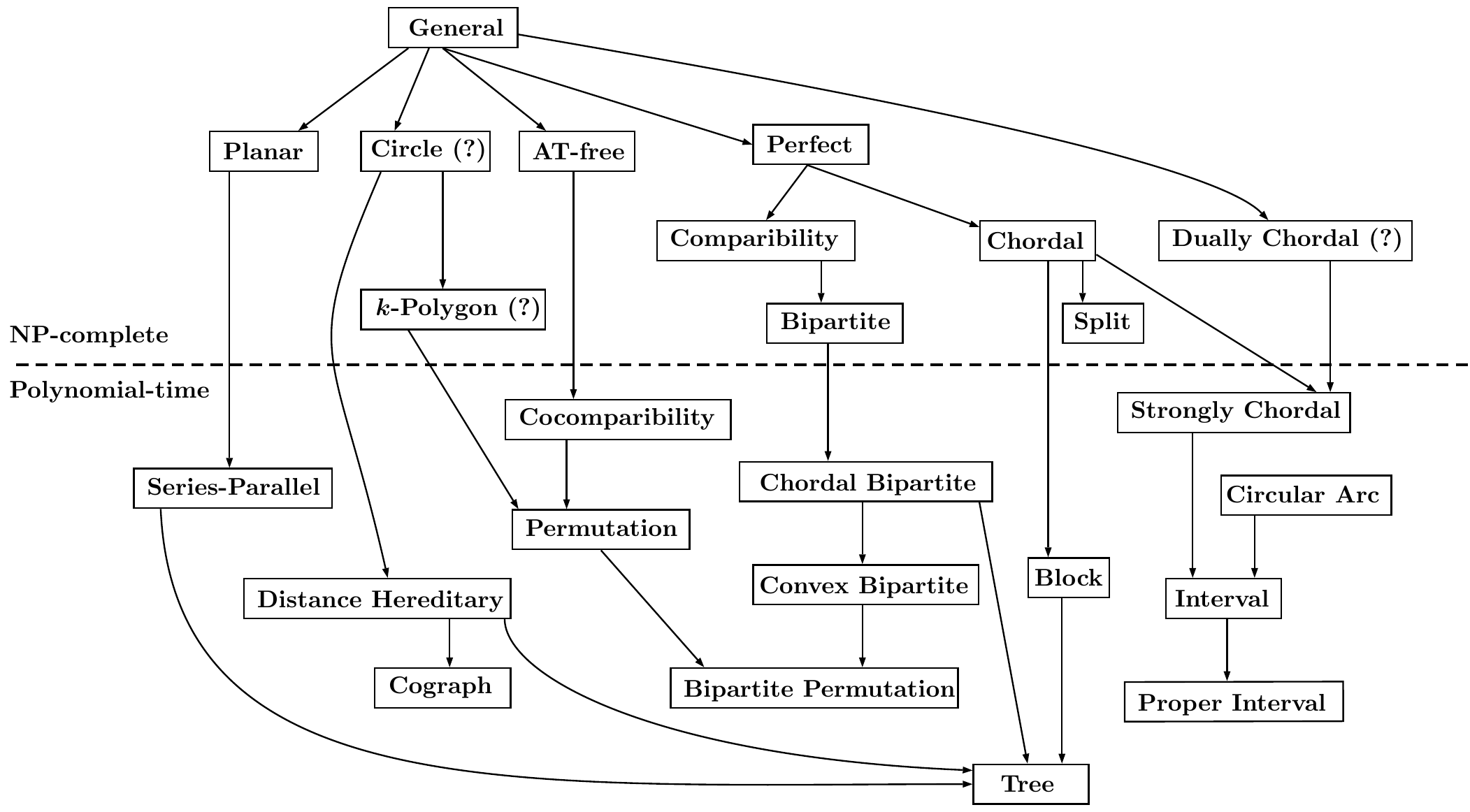}
\caption{Complexity status of \textsc{Min-PD} problem in some well known graph classes.}
\label{fig:1}
\end{center}
\end{figure}

The computational complexity of the problem is still unknown in some graph classes including planar graphs, AT-free graphs and circle graphs. AT-free graphs is introduced by Corneil et al. in \cite{at}. AT-free graph class includes some important classes of graphs such as interval graphs, permutation graphs and cocomparability graphs as subclasses. A minimum dominating and total dominating set of an AT-free graph can be computed in polynomial-time, see \cite{at1}. In this paper, we investigate the computational complexity of the problem on AT-free graph and planar graphs. We show that minimum PD-set of an AT-free graph can be computed in polynomial-time. In addition, we give an approximation algorithm which computes a PD-set of any AT-free graph, within a factor of $2$. Lin et al. in \cite{paired6} and \cite{pddh} asked to determine the complexity of the problem in planar graphs. In this paper, we prove that \textsc{Decide PD-set} problem remain NP-complete even for  planar graphs. The section wise contribution of the paper is outlined as follows:

In Section~\ref{sec2}, we give insights on some notations and definitions, including properties of AT-free graphs. In Section~\ref{sec3}, we prove the existence of a linear-time $2$-approximation algorithm to compute a PD-set of an AT-free graph. In Section~\ref{sec4}, we design a polynomial time algorithm to compute a minimum cardinality  PD-set of an AT-free graph. In Section~\ref{sec5}, we show that the problem remains NP-hard for  planar graphs. Finally, Section~\ref{sec6} wind up the paper with some interesting open questions on the problem.

\section{Preliminaries}
\label{sec2}

\subsection{Basic Notations and Definitions}

In this paper, we consider only simple, connected and finite graphs with no isolated vertices. Let $G=(V,E)$ be a graph. The sets $V(G)$ and $E(G)$ represents node(vertex) set and edge set respectively of the graph. When there is no ambiguity regarding graph $G$, for simplification, we use $V$ and $E$ to denote of $V(G)$ and $E(G)$ respectively. For an edge $e = uv \in E$, $u$ and $v$ are called \emph{end vertices} of $e$. For any non-empty set $A \subseteq V$, the \emph{open neighbourhood of A}, symbolized as $N_{G}(A)$, is given by $N_{G}(A) = \bigcup_{v \in A} N_{G}(v)$ whereas the set $N_{G}[A] = N_{G}(A) \cup A$ is known as \emph{closed neighbourhood of A}. Further, for a set $A \subseteq V$, $G \setminus A$ represents the graph obtained by deleting vertices of set $A$ and all edges having at least one end vertex in $A$, from the graph. In case, $A=\{u\}$, we use $G \setminus u$, instead of using $G \setminus \{u\}$.

A subset $X$ of vertex set is an \emph{independent set} if no two vertices of $X$ are adjacent in $G$. A path $P$ in $G$ is a sequence of vertices $(x_1,x_2, \ldots, x_n)$ such that $(x_i,x_{i+1}) \in E$ for each $i \in \{1,2, \ldots, n-1\}$. For a path  $P =(x_1,x_2, \ldots, x_{n+1})$ in $G$, the length of $P$ is $\vert V(P)-1\vert =n$. Let $x,y \in V(G)$. The distance between $x$ and $y$ in the graph $G$, denoted by $d_{G}(x,y)$, is the length of a shortest path between $x$ and $y$. The diameter of a graph $G$, denoted by $diam(G)$,  is defined as $diam(G) = $ max$\{d_{G}(x,y) \mid x,y \in V(G)\}$. We use the standard notation $[n]$ to denote the set $\{1,2,\ldots,n\}$.

\subsection{AT-free Graphs}

\emph{Let $G=(V,E)$ be a graph}. A set $T= \{p,q,r\}$ of three vertices, is called an \emph{asteroidal tripe}(in short AT) if $T$ is an independent set and for any two vertices in the set $T$ there exits a path $\mathcal{P}$ between them such that $V(\mathcal{P})$ does not contain any vertex from the closed neighbourhood of third.
A graph is \emph{AT free} if it does not contain an asteroidal tripe. A path on six vertices is an example of an AT-free graph.

\begin{definition}
In a graph $G=(V,E)$, a pair of vertices $(x,y)$ is called a dominating pair, if the vertex set of any path between $x$ and $y$ in $G$ is a dominating set of $G$. A dominating shortest path is a shortest path connecting $x$ and $y$ in $G$. 
\end{definition}

\begin{figure}[h]
\begin{center}
\includegraphics[width=0.45\textwidth]{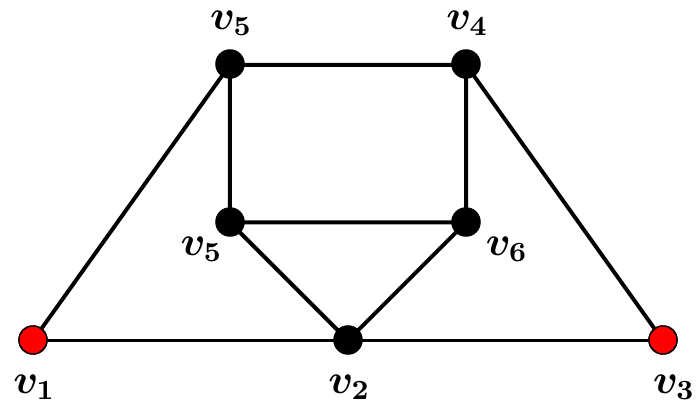}
\caption{An AT-free graph $G$}
\label{fig::3}
\end{center}
\end{figure}

\noindent An asteroidal triple free graph is shown in Fig.~\ref{fig::3}. For the graph $G$ in Fig.~\ref{fig::3}, $(v_{1},v_{3})$ is a dominating pair, and $P=(v_{1},v_{2},v_{3})$ is a dominating shortest path. We have the following result for a connected AT-free graph in the literature.

\begin{theorem}~\cite{at,at3}
\label{th:1}
A dominating pair exists in every AT-free graph which can be computed in linear time.
\end{theorem}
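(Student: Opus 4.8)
The plan is to prove the two claims separately: first that a connected AT-free graph always contains a dominating pair, and then that such a pair can be extracted in linear time. (Here I use that $G$ is connected, as assumed throughout the paper, since otherwise the notion of dominating pair is vacuous.) For existence I would argue by contradiction, exploiting the defining forbidden configuration of AT-free graphs. The idea is to fix a well-chosen candidate pair $(x,y)$ together with a shortest path $P$ between them, assume that $V(P)$ fails to dominate $G$, and then manufacture an asteroidal triple.

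Concretely, suppose some vertex $w$ is not dominated by $V(P)$, so $w \notin N_G[V(P)]$; in particular $w$ is non-adjacent to every vertex of $P$, and if $x$ and $y$ are non-adjacent (which I can arrange by insisting $d_G(x,y) \ge 2$), the set $\{x,y,w\}$ is independent. The path $P$ itself connects $x$ and $y$ while avoiding $N_G[w]$, because no vertex of $P$ can lie in $N_G[w]$. Hence $\{x,y,w\}$ becomes an asteroidal triple the instant I can also produce an $x$--$w$ path avoiding $N_G[y]$ and a $y$--$w$ path avoiding $N_G[x]$. Since $G$ is AT-free, at least one of these two paths must fail to exist, which means that $N_G[y]$ separates $x$ from $w$, or $N_G[x]$ separates $y$ from $w$.

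The crux is therefore to choose $x$ and $y$ so that neither separation can occur. I would obtain the pair by a double graph search, namely two sweeps of a layered (LBFS-style) traversal: start from an arbitrary vertex and let $x$ be the last vertex visited, then search again from $x$ and let $y$ be the last vertex visited. The reason for taking $x$ and $y$ to be last-visited, hence \emph{peripheral}, vertices is that their closed neighbourhoods sit at the extreme ends of the graph and should not be able to cut the opposite endpoint off from any vertex $w$. This is the hard part: converting the intuition ``a last-visited vertex is peripheral, so its closed neighbourhood cannot separate the rest of the graph'' into a rigorous statement requires the structural theory of such searches on AT-free graphs -- precisely the properties guaranteeing that a search endpoint behaves like a pole of a dominating pair. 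Establishing this separation-free property, and disposing of the degenerate cases $d_G(x,y) \le 1$, is where the real work lies.

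Finally, for the complexity claim I would implement each sweep by partition refinement, which runs in $O(|V|+|E|)$ time; performing two sweeps and reading off the last-visited vertices keeps the total within $O(|V|+|E|)$, giving the linear-time bound. Thus both existence and efficient construction follow, with the separation-avoidance lemma for search endpoints being the single technical ingredient that carries the proof.
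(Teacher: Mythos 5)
The paper itself does not prove this statement at all; it imports it verbatim from Corneil, Olariu and Stewart \cite{at,at3}, so your argument has to stand on its own. It does not: the single ingredient you defer --- that the two endpoints produced by a double LBFS sweep have the ``separation-free'' property, i.e.\ that for every vertex $w$ neither $N_G[x]$ separates $y$ from $w$ nor $N_G[y]$ separates $x$ from $w$ --- is not a technical detail to be disposed of; it \emph{is} the theorem. The part you do prove rigorously (if $w \notin N_G[V(P)]$ and $\{x,y,w\}$ is independent, then $P$ itself avoids $N_G[w]$, so AT-freeness forces one of the two separations) is the easy direction: it merely reduces the theorem to exactly the lemma you leave open. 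Establishing that lemma is the bulk of the cited works --- it requires the structural theory of LBFS on AT-free graphs (the interval and ``pokability'' machinery of Corneil--Olariu--Stewart), and no short argument for it is known. Acknowledging that ``the real work lies'' there is an accurate self-assessment, but it means the proposal is an outline with its core missing, not a proof.

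Two secondary problems. First, the paper's definition requires \emph{every} $x$--$y$ path to have dominating vertex set, whereas your contradiction argument is set up only for a shortest path $P$; as it happens, the AT construction works verbatim for an arbitrary path (independence of $\{x,y,w\}$ needs only $xy \notin E$, not shortestness of $P$), so as written you would at best obtain a dominating shortest path, not a dominating pair. Second, the reduction to the case $d_G(x,y)\ge 2$ is asserted (``which I can arrange'') but never justified; in a complete graph, for instance, no pair at distance $2$ exists, so the degenerate cases $d_G(x,y)\le 1$ must be argued separately rather than arranged away. The linear-time claim via partition refinement is fine in itself, but its correctness is contingent on the very lemma that is missing.
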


\section{Approximation Algorithm}
\label{sec3}

In this section, we show that a PD-set of an AT-free graph $G$, can be computed in linear time whose cardinality is at most twice of $\gamma_{pr}(G)$.
Let $G$ is an AT-free graph. Using Theorem~\ref{th:1}, we note that there exists a dominating pair $(x,y)$ in $G$. Assume that $P$ is a dominating shortest path between $x$ and $y$ in $G$, and the number of vertices in $P$ are $t$. Note that any vertex that is not in $P$ is adjacent to some vertex of $V(P)$, as the set $V(P)$ is a dominating set of $G$. We may also conclude that any vertex not in $P$ has at most three neighbours in $P$, since otherwise $P$ will not be a shortest path. By a similar argument we note that any two adjacent vertices in $G$ dominate at most the vertices of a $P_4$ in $P$. Consequently, $\frac{\gamma_{pr}}{2} \geq \lceil \frac{t}{4} \rceil$, that is, $\gamma_{pr} \geq 2 \cdot \lceil \frac{t}{4} \rceil$. Before proving the Theorem~\ref{thapx}, which is the main result of this section, we notice that the following lemma is true.

\begin{lemma}
\label{lemma:b}
For any odd positive integer $n$, $\lceil \frac{n}{4} \rceil \geq \frac{n+1}{4}$.
\end{lemma}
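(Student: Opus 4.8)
The plan is to clear the denominator and read the ceiling combinatorially. Multiplying the target inequality $\lceil n/4\rceil \ge (n+1)/4$ through by $4$, it suffices to prove $4\lceil n/4\rceil \ge n+1$. First I would note that $4\lceil n/4\rceil$ is exactly the smallest multiple of $4$ that is at least $n$. The key observation is then that, because $n$ is odd, $n$ is in particular not divisible by $4$, so $n$ itself cannot be this smallest multiple; hence $4\lceil n/4\rceil > n$. Since both sides are integers, the strict inequality upgrades to $4\lceil n/4\rceil \ge n+1$, which is precisely the required bound.

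If one prefers a more pedestrian route, a direct case analysis on the residue of $n$ modulo $4$ works equally well, and I would present it as a fallback. Being odd, $n$ satisfies either $n \equiv 1 \pmod 4$ or $n \equiv 3 \pmod 4$. Writing $n = 4k+1$ gives $\lceil n/4\rceil = k+1 = (n+3)/4 \ge (n+1)/4$, while writing $n = 4k+3$ gives $\lceil n/4\rceil = k+1 = (n+1)/4$; so the inequality holds in the first case and holds with equality in the second. Either formulation settles the claim in a couple of lines.

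There is essentially no genuine obstacle here, as the statement is elementary. The only point that genuinely requires care is to actually invoke the hypothesis that $n$ is odd, since the inequality is false for $n$ divisible by $4$: for instance $n = 4$ gives $\lceil n/4\rceil = 1 < 5/4 = (n+1)/4$. Oddness enters solely to guarantee that $n$ is not a multiple of $4$, which is exactly what forces the smallest multiple of $4$ above $n$ to strictly exceed $n$.
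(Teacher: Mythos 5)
Your proof is correct. The paper itself gives no argument for this lemma (its ``proof'' consists of the single sentence ``The proof is easy, and hence is omitted''), so there is no author approach to compare against; your write-up simply supplies the missing details. Both of your routes are valid: the integrality argument (since $n$ is odd, $n$ is not a multiple of $4$, so $4\lceil n/4\rceil > n$, and integrality upgrades this to $4\lceil n/4\rceil \ge n+1$) and the residue-class check ($n = 4k+1$ gives $\lceil n/4\rceil = k+1 > (n+1)/4$, while $n = 4k+3$ gives equality) are each complete, and your observation that oddness is exactly what rules out the failing case $4 \mid n$ correctly identifies the role of the hypothesis.
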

\begin{proof}
The proof is easy, and hence is omitted.
\end{proof}

\begin{theorem}
\label{thapx}
Given an AT-free graph $G$, a PD-set $D$ of $G$ can be computed in linear time, satisfying $\vert D \vert \le 2 \cdot \gamma_{pr}(G)$.
\end{theorem}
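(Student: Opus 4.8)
The plan is to convert the dominating shortest path $P$ directly into a paired dominating set, changing $V(P)$ by at most one vertex to fix the parity needed for a perfect matching, and then to charge the size of the resulting set against the lower bound $\gamma_{pr}(G) \ge 2\lceil t/4\rceil$ established above. Write $P = (u_1, u_2, \ldots, u_t)$ with $u_1 = x$ and $u_t = y$. The structural fact that makes the matching step transparent is that, since $P$ is a \emph{shortest} path, no two vertices $u_i, u_j$ with $\vert i-j\vert \ge 2$ are adjacent; hence $G[V(P)]$ is exactly the induced path on $u_1, \ldots, u_t$, and within $V(P)$ the only available matching edges are the consecutive ones $u_iu_{i+1}$.

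First I would dispose of the even case. When $t$ is even, take $D = V(P)$, which is dominating by hypothesis, and match consecutive vertices $\{u_1u_2, u_3u_4, \ldots, u_{t-1}u_t\}$ to obtain a perfect matching of $G[D]$. Thus $D$ is a PD-set with $\vert D\vert = t$. Since $\lceil t/4\rceil \ge t/4$, the lower bound yields $2\gamma_{pr}(G) \ge 4\lceil t/4\rceil \ge t = \vert D\vert$, as desired.

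The main work, and the only place I expect a genuine obstacle, is the odd case: fixing the parity of $V(P)$ without destroying either domination or the existence of a perfect matching. Here I would split on the endpoint $u_1$. If $u_1$ has a neighbour $w \notin V(P)$, set $D = V(P) \cup \{w\}$; then $\{u_1 w\} \cup \{u_2u_3, u_4u_5, \ldots, u_{t-1}u_t\}$ is a perfect matching of $G[D]$, and $D \supseteq V(P)$ remains dominating, so $D$ is a PD-set with $\vert D\vert = t+1$. Otherwise $N(u_1) = \{u_2\}$, and I claim $D = V(P)\setminus\{u_1\} = \{u_2, \ldots, u_t\}$ is itself a PD-set: $u_1$ is dominated by $u_2 \in D$, and since $u_1$ has no neighbour off $P$ no external vertex relied on $u_1$, so $D$ stays dominating; moreover $G[D]$ is the even induced path on $u_2, \ldots, u_t$, matched by $\{u_2u_3, \ldots, u_{t-1}u_t\}$. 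In this subcase $\vert D\vert = t-1$. The care needed is precisely in this last argument, that dropping the pendant endpoint preserves domination; it rests entirely on $N(u_1) = \{u_2\}$, which is exactly why the two subcases are split on whether $u_1$ has an off-path neighbour.

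Finally I would combine the bounds. In both odd subcases $\vert D\vert \le t+1$, and Lemma~\ref{lemma:b} gives $\lceil t/4\rceil \ge (t+1)/4$ for odd $t$, so $2\gamma_{pr}(G) \ge 4\lceil t/4\rceil \ge t+1 \ge \vert D\vert$; this is the role of Lemma~\ref{lemma:b}. For the running time, a dominating shortest path $P$ is computable in linear time by Theorem~\ref{th:1}, while testing whether $u_1$ has an off-path neighbour and reading off the stated matching are also linear, so the whole procedure runs in linear time.
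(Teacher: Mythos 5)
Your proposal is correct and follows essentially the same approach as the paper's proof: convert the dominating shortest path into a PD-set, fix the parity in the odd case by either adding an off-path neighbour of an endpoint or deleting that endpoint when it has none, and compare against the lower bound $\gamma_{pr}(G) \ge 2\lceil t/4\rceil$ via Lemma~\ref{lemma:b}. The only differences are cosmetic: you repair parity at the endpoint $u_1$ rather than $v_t$, and you make explicit the induced-path property of a shortest path that the paper uses implicitly.
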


\begin{proof}
Given an AT-free graph $G=(V,E)$, there is a linear-time algorithm to find a dominating pair $(x,y)$ of $G$ (by Theorem~\ref{th:1}). Let $P=(x=v_1,v_2\ldots v_{t-1}, v_t=y)$ be a shortest path between $x$ and $y$, and $D = V(P)$.  We have already observed that $\gamma_{pr}(G) \geq  2 \cdot \lceil \frac{t}{4} \rceil$. We prove the result under the following assumptions:

\medskip
\noindent \textbf{Case 1:} If $t$ is even.\\
Here, we note that the set $D$ is a PD-set and $\vert D\vert  = t \leq 4 \cdot \lceil \frac{t}{4} \rceil \leq 2 \cdot \gamma_{pr}(G)$.

\medskip
\noindent \textbf{Case 2:} If $t$ is odd.\\
In this case, we construct a PD-set of the graph $G$ by adding at most one vertex in $D$. Clearly, $D$ is a dominating set. For pairing, we pair $v_i$ with $v_{i+1}$ for $i \in [t-2]$. Now we need to pair $v_t$. Note that if $N(v_t) \subseteq D$ then $D \setminus \{v_t\}$ is a PD-set of $G$, otherwise if there exists a vertex $u \in N(v_{t}) \setminus D$ then the updated set $D=D \cup \{u\}$ is a PD-set of $g$. Therefore, we can always construct a PD-set $D$ of $G$, where  $\vert D\vert  \leq t+1$. Using Lemma~\ref{lemma:b}, we have $t+1 \leq 4 \cdot \lceil \frac{t}{4} \rceil $. Hence, $\vert D\vert \leq t+1 \leq 2 \gamma_{pr}(G)$.

\medskip
In both the cases, we can obtain a PD-set $D$ satisfying, $\vert D\vert  \leq 2 \gamma_{pr}(G)$. Hence, we have an efficient $2$-approximation algorithm to computes a PD-set of an AT-free graph.
\end{proof}

\section{Exact Polynomial-time Algorithm}
\label{sec4}

The main purpose this section is to establish a polynomial time algorithm that outputs a minimum cardinality PD-set, when the input graph is an AT-free graph. For this, we first present a theorem, which will be useful in designing our algorithm. In this theorem, we show that there exists a BFS-tree $T$ of $G$ and a minimum PD-set $D$ of $G$ such that the number of vertices of $D$ in some consecutive levels of $T$ are bounded. We will use the notation $L_{i}$ to denote the vertices, which are at $i^{th}$ level in the tree $T$, that is, the set of vertices which are at distance $i$ from the root node in tree $T$.
The following result is already known in literature.

\begin{theorem}~\cite{kloks1}
Let $G$ be an AT-free graph with dominating pair $(x,y)$ and $T$ be a BFS-tree of $G$ rooted at $x$. Let $L_0, L_1, L_{2}, \ldots ,L_l$ are the BFS-levels of the BFS-tree $T$. Then there exists a linear-time algorithm which computes a path $P=(x=x_{0},x_{1},x_{2},\ldots,x_{d}=y)$ such that $x_i \in L_{i}$ for each $0 \le i \le d$ and every vertex $w \in L_{i}$ for $i\in \{1,2,\ldots,l\}$ is adjacent to either $x_{i-1}$ or $x_{i}$.
 \end{theorem}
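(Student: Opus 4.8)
The plan is to take $P$ to be a shortest path from $x$ to $y$ through the level structure of $T$ and to upgrade the \emph{global} domination guaranteed by the dominating pair into the \emph{level-local} domination demanded by the statement. First I would record two automatic facts. Since $x\in L_0$ and $y\in L_d$ with $d=d_G(x,y)$, every shortest $x$--$y$ path meets each level $L_i$ ($0\le i\le d$) in exactly one vertex, so the requirement $x_i\in L_i$ comes for free; and because $(x,y)$ is a dominating pair (Theorem~\ref{th:1}), the vertex set of any such path already dominates $G$. The only gap between this and the claim is \emph{where} the dominating path-vertex sits: as BFS edges join only equal or consecutive levels, a vertex $w\in L_i$ can be adjacent on $P$ only to $x_{i-1}$, $x_i$, or $x_{i+1}$, and the claim fails exactly when some $w\in L_i$ is dominated on $P$ solely by the lower vertex $x_{i+1}$. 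Thus the task reduces to selecting a shortest path that never leaves such a ``skipped'' vertex behind.

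Second, I would construct the path inductively, sweeping the levels from $x$ and maintaining the invariant that after choosing $x_0=x,\dots,x_i$ with $x_{j-1}x_j\in E$, every vertex of $L_1\cup\cdots\cup L_i$ is already adjacent to its own-level or previous-level path vertex. To pass from $x_i$ to $x_{i+1}$, let $B=L_{i+1}\setminus N[x_i]$ be the part of the next level not yet covered by $x_i$; I would choose $x_{i+1}$ to be a neighbour of $x_i$ in $L_{i+1}$ whose closed neighbourhood contains all of $B$, so that $x_i$ and $x_{i+1}$ jointly cover $L_{i+1}$ and the invariant is preserved.

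The key step --- and the one I expect to be the main obstacle --- is the \emph{single-dominator lemma}: there really exists a neighbour of $x_i$ in $L_{i+1}$ that dominates every vertex of $B$. This is exactly where AT-freeness must be used, since in a general graph $B$ could require several dominators. The natural approach is by contradiction: if two vertices $u,w\in B$ admitted no common dominator among $N(x_i)\cap L_{i+1}$, I would try to exhibit an asteroidal triple $\{u,w,z\}$ with an anchor $z$ taken as either the root $x$ or the far endpoint $y$. Two of the witnessing paths would be ascending tree-paths toward the anchor --- such a path meets $N[w]$ of a same-level vertex $w$ only possibly at its single step into $L_{i-1}$, because $w$ has no neighbour two levels higher --- and the third a ``low'' path between $u$ and $w$ inside $B$. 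Verifying simultaneously that the triple is independent and that all three connecting paths avoid the closed neighbourhood of the third vertex is the delicate part; choosing the anchor $z$ and the routing so that this works is where the real AT-free structure (equivalently, the linear order on the components of $G\setminus N[x_i]$) has to be invoked.

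Finally, I would reconcile the selection with the requirement that $P$ terminate exactly at $y$: among the admissible dominators of $B$ I must pick one lying on a shortest path to $y$, that is with $d_G(x_{i+1},y)=d-(i+1)$, and argue via the dominating-pair property that such a choice is always available (the dominator forced by the lemma cannot overshoot $y$). For the running time, the whole construction is a single guided BFS from $x$: compute the levels once, and at each level select the dominator by scanning the bounded interaction between consecutive levels, giving total time $O(n+m)$ and hence the claimed linear-time algorithm.
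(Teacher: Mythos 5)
The paper does not actually prove this statement at all: it is imported verbatim from Kloks--Kratsch--M\"{u}ller \cite{kloks1}, so your attempt has to stand on its own, and it does not. The entire content of the theorem is concentrated in what you call the \emph{single-dominator lemma}, and that is precisely the step you leave as a sketch. Two things go wrong there. First, your contradiction setup proves the wrong statement: assuming some two vertices $u,w\in B$ have no common dominator in $N(x_i)\cap L_{i+1}$ and deriving an asteroidal triple would only show that every \emph{pair} of vertices of $B$ has a common dominator; that does not yield a \emph{single} vertex of $N(x_i)\cap L_{i+1}$ dominating all of $B$ --- pairwise common domination is not global common domination, and the missing Helly-type step is itself nontrivial. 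Second, the triple you propose need not exist even for a bad pair: $\{u,w,z\}$ is an asteroidal triple only if it is independent, but $u,w\in B$ may be adjacent; and the witnessing paths are not controlled --- the ascending tree-path from $u$ toward the anchor passes through a vertex one level above $u$, which can perfectly well lie in $N(w)$ (a neighbour of a vertex of $L_{i+1}$ can sit in $L_i$), so the claim that ascending paths essentially avoid $N[w]$ fails as stated. You flag this as ``the delicate part,'' but it is not a detail: it is the theorem.

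Beyond that lemma, the greedy scheme has an unaddressed dead-end problem. Even if at every level \emph{some} neighbour of $x_i$ dominating $B$ exists, your invariant does not guarantee that the particular $x_{i+1}$ you pick --- which must additionally satisfy $d_G(x_{i+1},y)=d-(i+1)$ so the path can terminate at $y$ --- still admits a valid extension at the next level; nothing rules out that a locally valid choice forecloses all later choices, and the assertion that ``the dominator forced by the lemma cannot overshoot $y$'' is stated, not argued. These are exactly the places where the AT-free structure has to do real work (via the linear structure of components of $G\setminus N[v]$, or an induction along the dominating pair), and the proposal defers all of them. Your reduction of the problem to level-local domination, and the observation that a vertex of $L_i$ can only be dominated by $x_{i-1}$, $x_i$, or $x_{i+1}$, are correct and useful; but as written this is an outline whose crucial steps are missing, not a proof.
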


\begin{theorem}\label{t:thm1}
Let $G=(V,E)$ be an AT-free graph and $(x,y)$ be a dominating pair of $G$. If $L_0, L_1, L_{2}, \ldots ,L_l$ are the BFS-levels of the BFS-tree $T$ rooted at $x$ then there exists a minimum cardinality PD-set $D_{p}$ of $G$ such that $\vert  D_p \cap \bigcup_{k=i}^{i+j} L_k \vert  \leq j+4$ for all $i \in \{0, 1, \ldots l\}$ and $j \in \{0, 1, \ldots l-i\}$.
\end{theorem}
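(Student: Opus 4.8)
The plan is to use the spine path $P=(x_0,x_1,\dots,x_d)$ supplied by the preceding (Kloks) theorem as a cheap, level-respecting dominator and to turn the statement into an exchange argument against an optimal solution. First I would record the elementary BFS fact that every edge of $G$ joins vertices whose levels differ by at most one; hence a vertex of $L_k$ can only dominate vertices of $L_{k-1}\cup L_k\cup L_{k+1}$, and, combined with the spine property that every $w\in L_k$ is adjacent to $x_{k-1}$ or $x_k$, the pair $\{x_{k-1},x_k\}$ alone dominates $L_k$. Consequently the short spine segment $\{x_{i-2},x_{i-1},\dots,x_{i+j+1}\}$, of size exactly $j+4$, dominates every vertex lying in levels $L_{i-1}$ through $L_{i+j+1}$, i.e.\ the whole window $\bigcup_{k=i}^{i+j}L_k$ together with the two levels flanking it (with the obvious truncation when $i$ or $i+j$ is near an end of the path).

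Next I would fix a minimum PD-set and, in order to obtain a \emph{single} set that is simultaneously sparse in every window, choose among all minimum PD-sets one that in addition minimizes a secondary concentration potential, for instance $\sum_k \lvert D\cap L_k\rvert^2$. Write $M$ for a perfect matching of the induced subgraph on this set $D_p$. The core step is a local replacement: assuming some window carries at least $j+5$ vertices of $D_p$, I would delete from $D_p$ all matched pairs of $M$ whose \emph{both} endpoints lie in the window and insert the spine segment $\{x_{i-2},\dots,x_{i+j+1}\}$. Deleting only complete internal pairs leaves the rest of $M$ intact, while the inserted segment is a path of $G$ and so can be matched along consecutive vertices; any vertex of $L_{i-1}$ or $L_{i+j+1}$ that loses its dominator when an internal $L_i$- or $L_{i+j}$-pair is removed is recaptured because it is adjacent to a spine vertex we have just added. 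One then checks that the resulting set is again a dominating set admitting a perfect matching and that its size does not exceed $\lvert D_p\rvert$, which contradicts either the minimality of $\lvert D_p\rvert$ or the minimality of the secondary potential, forcing $\lvert D_p\cap\bigcup_{k=i}^{i+j}L_k\rvert\le j+4$ for every admissible $i,j$.

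I expect two places to carry all the difficulty, and they are exactly what the additive constant $4$ is there to absorb. The first is boundary domination: removing solution vertices from $L_i$ (resp.\ $L_{i+j}$) can undominate vertices of the adjacent level $L_{i-1}$ (resp.\ $L_{i+j+1}$), which is why the inserted spine segment must reach one level beyond the window on each side, turning the naive count $j+1$ into $j+4$. The second, and the genuine obstacle, is repairing the matching. The inserted spine path has $j+4$ vertices, so when $j$ is odd one vertex is left unmatched and must be paired off using a further adjacent spine vertex; more seriously, one must control matched pairs of $M$ that straddle a window boundary, since such a pair keeps a vertex inside the window while escaping the ``internal-pairs-only'' deletion. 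The plan is to treat straddling pairs by a companion replacement at the offending boundary, swapping three or more crossing pairs for a bounded spine segment that again reaches one extra level outward, so that excessive straddling likewise triggers a strictly improving exchange. Verifying that each such replacement never increases the size and strictly decreases the secondary potential whenever the window bound is violated is the main technical work; in every case the slack built into $j+4$, together with the fact that each vertex is adjacent to $\{x_{k-1},x_k\}$, is what makes the bookkeeping close.
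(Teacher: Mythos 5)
Your overall strategy --- an exchange argument that swaps the solution's vertices in an overloaded window for the spine segment $\{x_{i-2},\dots,x_{i+j+1}\}$ of size $j+4$ --- is the same as the paper's, and your first paragraph (that this segment dominates all of $L_{i-1}$ through $L_{i+j+1}$) is exactly the paper's key domination fact. But there is a genuine gap at the step you yourself flag as ``the genuine obstacle'': matched pairs straddling the window boundary. Your replacement deletes only pairs with both endpoints inside the window, so in the extreme case where all $j+5$ (or more) window vertices of $D_p$ lie in $L_i\cup L_{i+j}$ and each is matched to a partner in $L_{i-1}$ or $L_{i+j+1}$, your exchange deletes nothing, inserts up to $j+4$ new vertices, and both the size and the window count strictly increase --- the minimality contradiction never materializes. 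The ``companion replacement'' you propose for crossing pairs is only a sketch: no accounting is given showing it preserves $\vert D_p\vert$, preserves domination, repairs the matching, and decreases your secondary potential, and this is precisely the part that carries all the difficulty.

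The paper removes this obstacle before it arises by an extremal choice of the violating window: among all pairs $(i,j)$ with $\vert D_p\cap\bigcup_{k=i}^{i+j}L_k\vert\ge j+5$, it takes $i'$ minimum and then $j'$ maximum. Minimality of $i'$ forces $D_p\cap L_{i'-1}=\emptyset$ (otherwise $(i'-1,j'+1)$ would also violate the bound), and maximality of $j'$ forces $D_p\cap L_{i'+j'+1}=\emptyset$. Since BFS edges join vertices at most one level apart, every matched partner of a window vertex then lies inside the window: there are no straddling pairs at all, the deletion $D_p\setminus\bigcup_{k=i'}^{i'+j'}L_k$ cleanly removes at least $j'+5$ vertices without breaking any surviving pair, and inserting the spine segment (plus, in the odd-parity cases, one extra neighbour $u$ of $x_{i'+j'+1}$, which is exactly where the slack between $j'+4$ and $j'+5$ is spent) yields a PD-set no larger than $D_p$. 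If you add this extremal choice, your argument essentially collapses to the paper's; without it, the straddling-pair case is unhandled and the proof does not go through. Separately, your termination mechanism via the potential $\sum_k\vert D\cap L_k\vert^2$ would need the unproven claim that each exchange strictly decreases it; the paper instead iterates exchanges and shows that the smallest violated index strictly increases after each one, bounding the number of exchange steps by the length of the spine.
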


\begin{proof}
Let $G=(V,E)$ be an AT-free graph and $D_p$ be a minimum cardinality PD-set of the graph $G$. Suppose that the set $D_p$ does not satisfy the given property, that is, there is at least one pair $(i,j)$ such that $\vert  D_p \cap \bigcup_{k=i}^{i+j} L_k \vert  > j+4$ where $i \in \{0, 1, \ldots l\}$ and $j \in \{0, 1, \ldots l-i\}$. Let $B = \{ (i,j): ~ \vert  D_p \cap \bigcup_{k=i}^{i+j} L_k \vert  \geq j+5 \}$. Note that $B \neq \emptyset$. Now we choose pair $(i', j')$ such that $i' = $ min$\{i \vert  (i,j) \in B\}$ and $j' = $ max$\{j \mid (i',j) \in B\}$. By the choice of the pair $(i',j')$, note that $D_p \cap L_{i'-1} = \emptyset$ and $D_p \cap L_{i'+j'+1} = \emptyset$. Using the properties of a BFS-tree, we note that for any vertex $v \in (D_p \cap \bigcup_{k=i'}^{i'+j'} L_k)$, any neighbor of $v$ belongs to one of the levels $L_{i'-1}, L_{i'},  \cdots L_{i'+j+1}$. Let $A=\{x_{i'-2}, x_{i'-1}, \ldots x_{i'+j'+1}\}$. Note that, $\vert A\vert  = j'+4$. Since $V(P)$ is a dominating set of $G$ and each vertex $z \in L_{i}$ is adjacent to either $x_{i-1}$ or $x_i$, $\bigcup_{k=i'-1}^{i'+j'+1} L_k \subseteq N[A]$. Now by updating $D_p$ we will find another minimum PD-set $D'_{p}$ such that $A \subseteq D'_{p}$.

\newpage
\noindent \textbf{Case 1:} If $x_{i'-2} \notin D_p$ and  $\vert A\vert $ is even.

Since $D_p \cap L_{i'+j'+1} = \emptyset$, $x_{i'+j'+1} \notin D_{p}$. If $x_{i'-2} \notin D_p$ and  $\vert A\vert $ is even then the set $D'_{p} = (D_{p} \setminus \bigcup_{k=i'}^{i'+j'} L_k) \cup A$ is a PD-set of $G$ with $\vert D'_{p}\vert  < \vert D_{p}\vert $, a contradiction to the choice of $D_p$.

\medskip
\noindent \textbf{Case 2:} If $x_{i'-2} \notin D_p$ and  $\vert A\vert $ is odd.

Note that $\vert A\vert $ is odd and $G[A]$ is a path, if we include $A$ in a PD-set  we can pair all the vertices in $A$ except one. We pair $(x_{i'-2},x_{i'-1})$, $(x_{i'},x_{i'+1}), \ldots ,(x_{i'+j'-1},x_{i'+j'})$. Now we need to pair $x_{i'+j'+1}$. If $(N_G(x_{i'+j'+1}) \setminus \{x_{i'+j'}\}) \subseteq D_p$ and $(N_G(x_{i'+j'+1}) \setminus \{x_{i'+j'}\}) \cap (L_{x_{i'+j'+1}} \cup L_{x_{i'+j'}}) = \emptyset$. In this case using the property of path $P$, note that all the vertices in $L_{i'+j'+1}$ is adjacent to $x_{i'+j'}$. Hence the set $D'_{p} = (D_{p} \setminus \bigcup_{k=i'}^{i'+j'} L_k) \cup (A \setminus \{x_{i'+j'+1}\})$ is a PD-set of $G$ with $\vert D'_{p}\vert  < \vert D_{p}\vert $, a contradiction. If there is a vertex $u \in N_G(x_{i'+j'+1}) \setminus \{x_{i'+j'}\}$ such that $u \notin D_p$ or if $N_G(x_{i'+j'+1}) \setminus \{x_{i'+j'}\} \subseteq D_p$ but there is a vertex $u \in N_G(x_{i'+j'+1}) \setminus \{x_{i'+j'}\}$ such that $u \in (L_{x_{i'+j'+1}} \cup L_{x_{i'+j'}})$ then take $A' = A \cup \{u\}$. Note that the set $D'_{p} = (D_{p} \setminus \bigcup_{k=i'}^{i'+j'} L_k) \cup A'$ is a PD-set of $G$, implying that $\vert D'_{p}\vert  \geq \vert D_{p}\vert $. Also we have $\vert  D_p \cap \bigcup_{k=i'}^{i'+j'} L_k \vert  \geq j'+5$ and $\vert A'\vert  = j'+5$ implying that $\vert D'_{p}\vert  \leq \vert D_{p}\vert $. Hence $D'_{p}$ is also a minimum PD-set of $G$.

\medskip
\noindent \textbf{Case 3:} If $x_{i'-2} \in D_p$ and  $\vert A\vert $ is even.

Since $D_p \cap L_{i'-1} = \emptyset$ and $D_p \cap L_{i'+j'+1} = \emptyset$, no vertex in $D_{p} \setminus (\bigcup_{k=i'-1}^{i'+j'+1} L_k)$  is paired with a vertex in $\bigcup_{k=i'-1}^{i'+j'+1} L_k$ implying that $x_{i'-2}$ is not paired with any vertex of $A$. Note that $ \vert  A \setminus \{ x_{i'-2}\} \vert $ is odd. Hence, we can pair the vertices of $A \setminus \{ x_{i'-2}\}$ except one. Since $G[A \setminus \{ x_{i'-2}\}]$ is a path, we pair $(x_{i'-1},x_{i'})$, $(x_{i'+1},x_{i'+2}), \ldots$, $(x_{i'+j'-1},x_{i'+j'})$. Now we need to find a pair of $x_{i'+j'+1}$. Similar to the previous case, if $(N_G(x_{i'+j'+1}) \setminus \{x_{i'+j'}\}) \subseteq D_p$ and $(N_G(x_{i'+j'+1}) \setminus \{x_{i'+j'}\}) \cap (L_{x_{i'+j'+1}} \cup L_{x_{i'+j'}}) = \emptyset$, then using the property of path $P$, we nay observe that all the vertices in $L_{i'+j'+1}$ is adjacent to $x_{i'+j'}$. Hence the set $D_{p'} = (D_{p} \setminus \bigcup_{k=i'}^{i'+j'} L_k) \cup (A \setminus \{x_{i'+j'+1})\}$ is a PD-set of $G$ with $\vert D_{p'}\vert  < \vert D_{p}\vert $, a contradiction. If there is a vertex $u \in N_G(x_{i'+j'+1}) \setminus \{x_{i'+j'}\}$ such that $u \notin D_p$ or if $N_G(x_{i'+j'+1}) \setminus \{x_{i'+j'}\} \subseteq D_p$ but there is vertex $u \in N_G(x_{i'+j'+1}) \setminus \{x_{i'+j'}\}$ such that $u \in (L_{x_{i'+j'+1}} \cup L_{x_{i'+j'}})$ then take $A' = A \cup \{u\}$. Note that the set $D_{p'} = (D_{p} \setminus \bigcup_{k=i'}^{i'+j'} L_k) \cup A'$ is a PD-set of $G$, implying that $\vert D_{p'}\vert  \geq \vert D_{p}\vert $. Also we have $\vert  D_p \cap \bigcup_{k=i'}^{i'+j'} L_k \vert  \geq j'+5$ and $\vert A'\vert  = j'+5$ implying that $\vert D_{p'}\vert  \leq \vert D_{p}\vert $. Hence $D_{p'}$ is also a minimum PD-set of $G$.

\medskip
\noindent \textbf{Case 4:} If $x_{i'-2} \in D_p$ and  $\vert A\vert $ is odd.

Since $x_{i'-2} \notin D_p$ and $\vert A\vert $ is odd, similar to previous cases we can pair the vertices $(x_{i'-1},x_{i'})$, $(x_{i'+1},x_{i'+2}), \ldots$, $(x_{i'+j'-1},x_{i'+j'})$ and we can find a PD-set of smaller cardinality or a vertex $u$ such that if we take $A' = A \cup \{u\}$ then the set $D_{p'} = (D_{p} \setminus \bigcup_{k=i'}^{i'+j'} L_k) \cup A'$ is a PD-set of $G$, implying that $\vert D_{p'}\vert  \geq \vert D_{p}\vert $. Also we have $\vert  D_p \cap \bigcup_{k=i'}^{i'+j'} L_k \vert  \geq j' +5$ and $\vert A'\vert  = j'+ 5$ implying that $\vert D_{p'}\vert  \leq \vert D_{p}\vert $. Hence $D_{p'}$ is also a minimum PD-set of $G$.

\medskip
Further, note that if $i'$ is $0$ or $1$, we can choose $A=\{x_0,x_1, \ldots, x_{i'+j'+1}\}$ if $\vert \{x_0,x_1, \ldots, x_{i'+j'+1}\}\vert $ is even, otherwise we can choose $A=\{x_0,x_1, \ldots, x_{i'+j'+1},u\}$, where $u \in N(x_{i'+j'+1})$. We can show the existence of $u$ as we did above. In both the cases $\vert A\vert  \leq j' + 4$ implying that $D'_{p} = (D_{p} \setminus \bigcup_{k=i'}^{i'+j'} L_k) \cup A$ is a PD-set of $G$ having cardinality less than the minimum cardinality PD-set $D_{p}$ of $G$, a contradiction. Hence $i' \notin \{0,1\}$. Similarly we can claim that $i'+j' \notin \{l-1,l\}$.

\medskip
We call this replacement of $D_{p}$ with $D_p'$ an exchange step. Now, if $\vert  D_p' \cap \bigcup_{k=i}^{i+j} L_k \vert  \leq j+4$ for all $i \in \{0, 1, \ldots l\}$ and $j \in \{0, 1, \ldots l-i\}$ then $G$ has a minimum paired dominating $D'_{p}$ satisfying the condition given in Theorem~\ref{t:thm1}. Otherwise, let $B' = \{ (i,j): ~\vert  D'_{p} \cap \bigcup_{k=i}^{i+j} L_k \vert  \geq j+5 \}$. Suppose $(i,j) \in B'$. Now we will show that $i > i'$. By contradiction suppose, $i \leq i'$. In this case note that $i+j \geq i'-2$ otherwise, $(i,j) \in B$, contradicting the choice of $i'$. Also, $ \vert  D'_{p} \cap L_{t}\vert  \geq  1$ for all $t \in \{i'-2, i'-1, \ldots , i'+j'+1\}$. Hence for $(i,j) \in B'$ with $i < i'$ and $i+j \geq i'-2$ there exits a $j'$ such that $(i,j') \in B'$ and $i+j' \geq i'+j'+1$. By construction of $D'_{p}$, we note that $ \vert  D_{p} \cap \bigcup_{k=i}^{i+j'}L_k \vert  \geq \vert  D'_{p} \cap \bigcup_{k=i}^{i+j'}L_k \vert  \geq j'+5$ implying that $(i,j') \in B$, a contradiction to the choice of $i'$ or $j'$. Hence $i > i'$. Therefore, if $i^{''} = $ min $\{i \mid (i,j) \in B'\}$ then $i^{''} > i'$.

This implies that, at every exchange step, we replace a minimum cardinality PD-set $D_p$ with an updated minimum cardinality PD-set $D_{p}'$. After each exchange step, we note that the smallest value of $i$ for which there was a $j \in \{0,1, \ldots, l-i\}$ satisfying $ \vert D_{p} \cap \bigcup_{k=i}^{i+j}L_k \vert \geq j+5$, for the minimum cardinality PD-set $D_{p}$, will increase. Therefore, we conclude that, if we start with any minimum cardinality PD-set $D_{p}$, we obtain a minimum cardinality PD-set $D_{p}'$, such that $\vert  D_{p}' \cap \bigcup_{k=i}^{i+j} L_k \vert  \leq j+4$ for all $i \in \{0, 1, \ldots l\}$ and $j \in \{0, 1, \ldots l-i\}$, by executing at most $d$ exchange steps.
\end{proof}

Now we are ready to present an algorithm to compute a minimum cardinality PD-set of an AT-free $G$. Using Theorem~\ref{t:thm1}, we may conclude that there is a minimum PD-set of $G$ that contains at most $6$ vertices from any three consecutive BFS-levels of $x$, where $(x,y)$ is a dominating pair of $G$. The idea behind our algorithm is the following:

In our algorithm, we explore a BFS-level of $x$ in each iteration. In the $i^{th}$-iteration of the algorithm, we do the following:
\begin{itemize}
\item store all the possible sets $X' \subseteq \bigcup_{j=0}^{i+1}L_j$ such that $X'$ dominates all the vertices till $i^{th}$-level.
\item ensure that all the vertices in $X' \cap (\bigcup_{j=0}^{i}L_j)$ are paired as these vertices can not be paired with a vertex at level $i+2$ or above.
\item for every possible set $X'$, store another set $X = X' \cap (L_{i} \cup L_{i+1})$
\end{itemize}

The set $X$ helps in extending a partial solution $X'$ to the next level as we are restricted to select at most $6$ vertices from any three consecutive levels in a minimum PD-set. Below, we have provided the detailed algorithm for computing a minimum cardinality PD-set $D_{p}$ of an AT-free graph $G$. The set $D_{p}$ maintains the property that it contains at most $6$ vertices from any three consecutive BFS-levels of $x$.

\begin{algorithm}[ht!]
\label{algo}\small
\textbf{Input:} A connected AT-free graph $G=(V,E)$ with a dominating pair $(x,y)$; \\
\textbf{Output:} A PD-set $D_{p}$ of $G$;\\
Compute the BFS-levels of $x$;\\
For $0 \leq i \leq l$, let $L_{i} = \{w \in V \mid d_{G}(x,w) = i\}$ denote the set of vertices at level $i$ in the BFS of $G$ rooted at $u$.\\
In particular, $L_{0} = \{x\}$.\\
Initialize the queue $Q_1$ which contains an ordered tuple $(X,X,size(X))$  for all non-empty $X \subseteq N[x]$ such that $size(X) = |X| \leq 6$;\\
Initialize $i=1$;\\
\While{$(Q_i \neq \emptyset$ and $i<l)$}{
	Update $i=i+1$;\\
	\For{$($each element $(X,X',size(X'))$ of the queue $Q_{i-1})$}{
		\For{$($every $U \subseteq L_i$ with $|X \cup U | \leq 6)$}{
			\If{$(L_{i-1} \subseteq N[X \cup U]$ and there exists a set $U' \subseteq U$ such that $G[X' \cup U'] $ has a perfect matching $)$}{
			$Y=(X \cup U) \setminus L_{i-2}$;\\
			$Y'=X' \cup U$;\\
			$size(Y') = size(X') + |U|$;\\
				\If{$($for all element $(X,X',size(X'))$ of $Q_i$, $X \neq Y)$}{
				insert $(Y, Y', size(Y'))$ in the queue $Q_i$;
				}
				\If{$($there is a tuple $(Z,Z',size(Z'))$ in $Q_i$ such that 	
				$Z=Y$ and $size(Y') < size(Z'))$}{
				delete $(Z,Z',size(Z'))$ form $Q_i$;\\
				insert $(Y, Y', size(Y'))$ in $Q_i$;
				}
			}
		}
	}
}	
Among all the triples $(X,X',size(X'))$ in the queue $Q_l$ that satisfy $L_l \subseteq N[X]$ and $G[X']$ has a perfect matching, find one such that $size(X')$ is minimum, say $(D,D',size(D'))$;\\
$D_p = D'$;\\
return $D_{p}$;	
\caption{\textbf{Minimum Paired Domination in AT-free Graphs}}
\end{algorithm}

\medskip
Now we prove the following theorem to show that the Algorithm~\ref{algo} returns a minimum PD-set. We also analyse the running time of the algorithm.

\begin{theorem}
\label{t:correct}
Let $G=(V,E)$ be an AT-free graph such that $\vert V\vert =n$ and $\vert E \vert = m$. Algorithm~\ref{algo} computes a minimum cardinality PD-set of $G$ in $O(n^{8.5})$-time.
\end{theorem}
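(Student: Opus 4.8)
The plan is to read Algorithm~\ref{algo} as a dynamic program that sweeps the BFS-levels $L_0,L_1,\dots,L_l$ of $x$ outward, carrying at each level $i$ a set $Q_i$ of states $(X,X',\mathrm{size}(X'))$, where $X'\subseteq\bigcup_{k\le i}L_k$ is a partial solution and $X=X'\cap(L_{i-1}\cup L_i)$ is its \emph{frontier}. First I would fix the intended invariant for a state to be \emph{valid}: $X'$ dominates $\bigcup_{k\le i-1}L_k$, $G[X']$ admits a matching saturating every vertex of $X'$ outside $L_i$ (so only level-$i$ vertices may remain unpaired), and $X=X'\cap(L_{i-1}\cup L_i)$. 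Using the BFS-locality of domination — a vertex of $L_{i-1}$ is dominated only from $L_{i-2}\cup L_{i-1}\cup L_i$, and a vertex of $L_l$ only from $L_{l-1}\cup L_l$ — I would verify that the inner test $L_{i-1}\subseteq N[X\cup U]$, the final test $L_l\subseteq N[X]$, and the final requirement that $G[X']$ have a perfect matching together certify that the returned $D_p$ is a genuine PD-set. This is the soundness direction and already yields $\mathrm{size}(D')\ge\gamma_{pr}(G)$.

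For the reverse inequality I would start from the minimum PD-set $D_p$ furnished by Theorem~\ref{t:thm1}, which satisfies $\lvert D_p\cap(L_i\cup L_{i+1}\cup L_{i+2})\rvert\le 6$, and track it level by level. Writing $X_i=D_p\cap(L_{i-1}\cup L_i)$ and $U=D_p\cap L_{i+1}$, the update sends the frontier $X_i$ to exactly $(X_i\cup U)\setminus L_{i-1}=D_p\cap(L_i\cup L_{i+1})=X_{i+1}$, while the stored size grows by $\lvert U\rvert$, so along this trajectory the size never exceeds $\lvert D_p\cap\bigcup_{k\le i}L_k\rvert$. The three feasibility tests of the inner loop hold along $D_p$: the cardinality bound $\lvert X_i\cup U\rvert\le 6$ is Theorem~\ref{t:thm1}, the domination test $L_i\subseteq N[X_i\cup U]$ follows because $X_i\cup U=D_p\cap(L_{i-1}\cup L_i\cup L_{i+1})$ contains all dominators of $L_i$, and the matching test holds for the honest prefix of $D_p$ by restricting the perfect matching of $D_p$. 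An induction on $i$ would then give $\mathrm{size}(D')\le\gamma_{pr}(G)$.

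I expect the main obstacle to be the interaction of this induction with the pruning step, which keeps for each frontier only the smallest-size state. The delicate point is that the update retains the \emph{whole} prefix $X'$ and re-tests the matching from scratch on $G[X'\cup U']$, so two valid states with the same frontier but different low parts $X'\cap\bigcup_{k\le i-2}L_k$ need not be interchangeable for future extension. To control this I would decompose any perfect matching of $G[X'\cup W]$ (with $W$ a future selection) by BFS-level: since no edge joins levels differing by more than one, the low part must be matched within itself together with some $S\subseteq X\cap L_{i-1}$, after which $(X\setminus S)\cup W$ must be perfectly matchable. Thus extendability reduces to $\mathcal S(X')\cap\mathcal G(X)\neq\emptyset$, where $\mathcal S(X')=\{S\subseteq X\cap L_{i-1}:G[(X'\cap\bigcup_{k\le i-2}L_k)\cup S]\text{ has a perfect matching}\}$ depends on the low part, while $\mathcal G(X)$ (the frontier vertices releasable toward the future) depends only on $X$. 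The crux is to show a minimum-size state per frontier still meets an optimal trajectory; since $\lvert X\cap L_{i-1}\rvert\le 6$, the signature $\mathcal S(X')$ ranges over only $O(1)$ families, so I would either prove interchangeability for DP-reachable states directly, or, if it fails, refine the state by appending this $O(1)$-valued signature — which preserves both correctness and the state count — and then re-run the tracking argument against $D_p$.

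Finally I would bound the running time. The frontier of a $Q_i$-state is a $\le 6$-subset of $L_{i-1}\cup L_i$, and each transition corresponds to choosing $X\cup U$, a $\le 6$-subset of $L_{i-2}\cup L_{i-1}\cup L_i$; hence the number of transitions at level $i$ is $O(a_i^{6})$ with $a_i=\lvert L_{i-2}\rvert+\lvert L_{i-1}\rvert+\lvert L_i\rvert$. Since $\sum_i a_i\le 3n$ and each $a_i\le n$, we get $\sum_i a_i^{6}\le n^{5}\sum_i a_i=O(n^{6})$, so the DP performs $O(n^{6})$ transitions in total. Each transition tries the $\le 2^{6}=O(1)$ subsets $U'\subseteq U$ and, per subset, runs one general-graph perfect-matching test on $G[X'\cup U']$ (up to $n$ vertices, $m$ edges) in $O(\sqrt{n}\,m)$ time via the Micali--Vazirani algorithm; the domination test and the hash-keyed queue operations on $\le 6$-element frontiers are cheaper. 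The total is therefore $O(n^{6})\cdot O(\sqrt{n}\,m)=O(m\,n^{6.5})$, and using $m=O(n^{2})$ this is $O(n^{8.5})$, as claimed.
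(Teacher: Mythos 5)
Your proposal follows the same two\mbox{-}part architecture as the paper's own proof: soundness via the level-wise invariant (a stored prefix $X'$ dominates all levels strictly below its top level and admits a matching saturating $X'\setminus L_i$), completeness by tracking the structured optimum $D_p$ supplied by Theorem~\ref{t:thm1} level by level, and the same operation count $O(n^{6})\cdot O(m\sqrt{n})=O(n^{8.5})$. The genuinely different content is your third paragraph, and it is the most valuable part: the paper's proof simply asserts that the tuple $(D_p\cap(L_{l-1}\cup L_l),\,D_p,\,\vert D_p\vert)$ ``will be'' in $Q_l$, never confronting the pruning rule, which keeps only one minimum-size tuple per frontier and may well discard $D_p$'s prefix in favour of a cheaper prefix with the same frontier. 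Because the later matching tests are run on the entire stored prefix rather than on the frontier alone, the surviving tuple need not admit the extensions that $D_p$'s prefix admits; this is exactly the interchangeability failure you isolate through the family $\mathcal{S}(X')$ of subsets of $X\cap L_i$ that can be released upward. Your fallback repair is sound: $\mathcal{S}(X')$ is a family of subsets of a set of size at most $6$, hence takes $O(1)$ values; it evolves deterministically under a transition, since by BFS-locality of edges a perfect matching of $G[(X'\cup U)\setminus T]$ decomposes into a perfect matching of $G[X'\setminus S]$ for some $S\in\mathcal{S}(X')$ together with a perfect matching of $G[S\cup(U\setminus T)]$; and the final test is just $\emptyset\in\mathcal{S}(X')$. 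Hence pruning within equal (frontier, family) classes is safe, the tracking induction against $D_p$ goes through verbatim, and both the state count and the running time change only by constant factors, preserving $O(n^{8.5})$.

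Two caveats you should make explicit. First, you leave a fork open (``prove interchangeability for DP-reachable states directly, or refine the state''); you should commit to the refinement, since nothing about DP-reachability forces two same-frontier prefixes to have comparable families $\mathcal{S}$. Second, the refinement amends the pruning criterion, so what you actually prove is correctness of a slightly modified Algorithm~\ref{algo}; for the algorithm as literally stated, both your argument and the paper's leave the completeness direction resting on a step that is not justified.
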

\begin{proof}
First, we show that Algorithm~\ref{algo} computes a minimum cardinality PD-set of an AT-free graph $G$. For any tuple $(X,X', size(X'))$ in queue $Q_i$, the set $X'$ represents a subsolution, $X$ represent the vertices picked in $X'$ from $(i-1)^{th}$ and $i^{th}$ levels, and $size(X')$ represents the cardinality of $X'$. We claim that for any  tuple $(X,X', size(X'))$ in queue $Q_i$ where $i \in [l]$,  $\bigcup_{j=0}^{i-1}L_j \subseteq N[X']$, that is, $X'$ dominates the all vertices $w$ such that $w \in L_r$ where $r \in [i-1]$ and if a vertex $w$ in $G[X']$ is unmatched then $w \in X' \cap L_i$,  that is, all vertices $w \in X' \setminus L_i$ is paired in $G[X']$. Note that this is true for $i=1$ because $Q_1$ contains all tuples $(X,X',size(X'))$ such that $X=X' \subseteq N[x]$ implying that $\{x\} = L_0 \subseteq N[X']$ satisfying the other property also.

Suppose the claim is true for $i-1$ where $i \in [l]$. Now we need to show that the claim is true for ordered tuples $(X,X', size(X'))$ in $Q_i$. Indeed, the tuple $(X,X', size(X'))$ is inserted in $Q_i$ only if there is tuple $(Y,Y', size(Y'))$ in $Q_{i-1}$, a set $A \subseteq L_i$ where, $ \vert  Y \cup A \vert  \leq 6$ such that, $L_{i-1} \subseteq N[Y \cup A]$, and there is a subset $B$ of $A$ such that induced subgraph $G[Y' \cup B]$ has a perfect matching. Note that $\cup_{j=0}^{i-1}L_j \subseteq N[Y']$ and all vertices $w \in Y' \setminus L_{i-2}$ is paired in $G[Y']$. Hence, by the way we have selected the set $A \subseteq L_{i+1}$, we have $\bigcup_{j=0}^{i-1}L_j \subseteq N[X']$ and all vertices $w \in X' \setminus L_{i-1}$ is paired in $G[X']$. This proves the claim.

Therefore, for any tuple $(X,X', size(X'))$ in $Q_l$ with $L_l \subseteq N[X]$ and $G[X']$ has a perfect matching, $X'$ is a PD-set of $G$. Hence, for any minimum cardinality PD-set $D_p$ of $G$ where $D_{p}$ contains at most $6$ vertices from any three consecutive BFS-levels of $x$, there will be a tuple $(X = D_p \cap (L_{l-1} \cup L_l)), X' = D_p, size(X') = \vert D_p\vert $ in $Q_l$ such that $L_l \subseteq N[X]$ and $G[X']$ has a perfect matching when Algorithm~\ref{algo} explores all BFS-levels of $x$. Consequently, we conclude that the Algorithm~\ref{algo} outputs a minimum cardinality PD-set of an AT-free graph.

Next, we analyse the time complexity of the Algorithm~\ref{algo}. Note that for each set $X$ and $U$ with $\vert  X \cup U \vert  \leq 6$, at most $O(n)$-time is required to check whether all the vertices at level $L_{i}$ are dominated. In addition for each $U' \subseteq U$ at most $O(m\sqrt{n})$ is required to check whether the subgraph $G[X' \cup U']$ has a perfect matching. Since there at most $2^6=64$ possibilities for $U'$, we need $O(m\sqrt{n})$ time to check check whether the subgraph $G[X' \cup U']$ has a perfect matching for all possibilities $U' \subseteq U$. Also note that there at most $O(n^6)$ subsets of $V$, whose size is at most $6$. Hence, the running time of the algorithm is $O(n^{8.5})$.
\end{proof}

\section{Paired Domination in Planar Graphs}
\label{sec5}
In this section we show that the \textsc{Decide PD-set} problem is NP-complete even when restricted to  planar graph. For this purpose, we will give a polynomial reduction from the  \textsc{Minimum Vertex Cover(Min-VC)} problem to  the \textsc{Min-PD} problem. In a graph $G=(V, E)$, a \emph{vertex cover} is a set $C \subseteq V$ such that $C$ has at  least one end point of every edge $e \in E$. The \textsc{Min-VC} problem require to compute a minimum cardinality vertex cover of a given graph $G$. The following theorem is already proved for the the \textsc{Min-VC} problem.

\begin{theorem}\cite{vc}
\label{vc}
The \textsc{Min-VC} problem is NP-hard for the planar cubic graphs.
\end{theorem}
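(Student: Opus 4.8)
The plan is to reduce from \textsc{Planar 3-SAT} (NP-complete, due to Lichtenstein), in which the variable–clause incidence graph is planar; by standard preprocessing I would assume each clause has exactly three literals and each variable occurs at least twice. Given such a formula $\phi$ with clauses $c_1,\dots,c_m$ and with variable $x_i$ occurring $k_i$ times (so $\sum_i k_i=3m$), I would construct a graph $G_\phi$ as follows. For $x_i$ I install a \emph{variable cycle} $C_{2k_i}$ on vertices $a_1^i,b_1^i,a_2^i,b_2^i,\dots,a_{k_i}^i,b_{k_i}^i$ (in this cyclic order), where $a_s^i$ stands for ``$x_i$ true at occurrence $s$'' and $b_s^i$ for ``$x_i$ false at occurrence $s$''; the crucial fact is that an even cycle $C_{2k}$ has a minimum vertex cover of size exactly $k$ and precisely two such covers, namely the two alternating sets, corresponding to the two truth values. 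For each clause $c_j$ I install a triangle on three \emph{slot vertices}, and I add one \emph{connecting edge} from each slot vertex to the cycle vertex matching the literal in that slot (to $a_s^i$ for a positive occurrence, to $b_s^i$ for a negative one).

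I would then prove the correspondence: $\phi$ is satisfiable if and only if $G_\phi$ has a vertex cover of size at most $5m=3m+2m$. Any vertex cover must spend at least $k_i$ on each $C_{2k_i}$ and at least $2$ on each triangle, so a cover of size $\le 5m$ is forced to use exactly one of the two alternating covers on every cycle (a consistent truth assignment) and exactly two vertices per triangle; the unchosen triangle vertex leaves its connecting edge to be covered from the variable side, which forces the corresponding literal to be true, i.e.\ every clause is satisfied. The converse direction builds the cover from a satisfying assignment in the same way. This part of the argument is clean and yields a planar graph of maximum degree $3$.

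The remaining---and main---difficulty is to make $G_\phi$ exactly $3$-regular (cubic). Every slot vertex and every ``used'' cycle vertex already has degree $3$, but each occurrence leaves exactly one cycle vertex ``unused'' with degree $2$, so there are $3m$ deficient vertices. A parity obstruction (the degree sum must be even) rules out fixing a single degree-$2$ vertex by a pendant gadget, so I would pair up the deficient vertices and join each pair through a small planar \emph{completion gadget} (for instance a copy of $K_4$ minus an edge, whose two degree-$2$ vertices receive the two new edges), padding so that $3m$ is even. The delicate point is that the completion must be \emph{cost-neutral}: it has to raise the minimum vertex cover by a fixed constant that is independent of the surrounding truth assignment and, in particular, must not disturb the two-alternating-covers structure of the variable cycles. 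I expect this bookkeeping---engineering the completion gadget so that its internal contribution is constant and its attachment edges are always covered from inside, together with verifying that the pairing can be routed without crossings in the planar embedding inherited from $\phi$---to be the hardest step; once it is settled, recomputing the target cover size and invoking the reduction above gives NP-hardness of \textsc{Min-VC} on planar cubic graphs.
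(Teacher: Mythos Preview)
The paper does not prove this theorem at all; it is quoted as a known result and attributed to Mohar~\cite{vc}, and is then used as the source problem for the reduction in Theorem~\ref{planar}. So there is no ``paper's own proof'' to compare against; the intended response to this statement is simply a citation.

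That said, a few words on your sketch. The core reduction from \textsc{Planar 3-SAT} via even variable cycles and clause triangles is the standard one and correctly yields NP-hardness of \textsc{Min-VC} on planar graphs of \emph{maximum} degree~$3$, with the tight target $5m$. The difficulty you flag in the regularization step is real and not resolved by your outline. Concretely, your suggested completion gadget $K_4$ minus an edge is \emph{not} cost-neutral: with the two degree-$2$ gadget vertices attached to cycle vertices $u,v$, an optimal cover of the gadget together with the two attachment edges costs $2$ when both $u,v$ are already in the cover and $3$ otherwise, and whether an ``unused'' cycle vertex lies in the alternating cover depends on the truth assignment. Hence the global optimum shifts with the assignment, which breaks the if-and-only-if. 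Moreover, pairing up $3m$ deficient vertices by planar routes is not automatic: within a single variable cycle it works only when $k_i$ is even, and cross-cycle pairings can force crossings in the embedding inherited from the \textsc{Planar 3-SAT} instance. A clean way around both issues (used in the literature) is to regularize \emph{locally}, i.e.\ attach to each degree-$2$ vertex a fixed planar cubic pendant whose minimum vertex cover contribution is the same whether or not the attachment point is in the cover; designing such a pendant (and handling the global parity by padding the instance) is what is actually needed here, rather than a pairing scheme.
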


Now, we prove the main result of this section.

\begin{theorem}
\label{planar}
The \textsc{Decide PD-set} problem is NP-complete for  planar graphs with maximum degree $5$.
\end{theorem}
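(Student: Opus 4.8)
The plan is to establish membership in NP and then give a polynomial reduction from \textsc{Min-VC} on planar cubic graphs (Theorem~\ref{vc}). Membership is clear: given a candidate set $D$, one checks in polynomial time that $D$ dominates $G$ and that $G[D]$ admits a perfect matching (by any polynomial matching algorithm), so \textsc{Decide PD-set} is in NP. For the hardness, given a connected planar cubic graph $G=(V,E)$ I would build $G'$ by \emph{subdividing every edge exactly once}: for each $e=uv\in E$ introduce a new vertex $x_e$ and replace $e$ by the path $u\,x_e\,v$. Subdivision preserves planarity and connectivity, every original vertex keeps degree $3$ while each $x_e$ has degree $2$, so $\Delta(G')=3\le 5$ and $G'$ has no isolated vertex. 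Writing $\tau(G)$ for the size of a minimum vertex cover, the whole reduction rests on the single claim $\gamma_{pr}(G')=2\,\tau(G)$, together with the map $k\mapsto 2k$; since hardness on planar graphs of maximum degree $3$ is a fortiori hardness on the larger class of maximum degree $5$, this proves the theorem.

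For the lower bound I would analyse an arbitrary PD-set $D$ of $G'$ and set $C=D\cap V$, $S=D\setminus V$. First, $C$ is a vertex cover: for $e=uv$, the vertex $x_e$ must be dominated, and since $N_{G'}[x_e]=\{x_e,u,v\}$ this forces $x_e\in D$ or $u\in C$ or $v\in C$; but if $x_e\in D$ then, having only $u,v$ as neighbours, $x_e$ must be matched to one of them, again placing an endpoint in $C$. Second, in $G'$ no two original vertices are adjacent and no two subdivision vertices are adjacent, so $G'[D]$ is bipartite with parts $C$ and $S$; a perfect matching therefore forces $|C|=|S|$, whence $|D|=2|C|\ge 2\,\tau(G)$.

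The forward direction is the technical heart, and the main obstacle is to realise domination and the perfect matching simultaneously with exactly $|C|$ subdivision vertices. Given a vertex cover $C$, the set $I=V\setminus C$ is independent, so every $w\in I$ sends all three of its edges into $C$; Hall's condition then yields a matching $\mu$ saturating $I$ into $C$ (any $I'\subseteq I$ has $|N(I')|\ge 3|I'|/3=|I'|$ neighbours in $C$). I would then extend the forced assignments $c=\mu(w)\mapsto e_w:=\{w,\mu(w)\}$ to a system of distinct representatives $r\colon C\to E$ with $c\in r(c)$: the edges $e_w$ are not incident to any unmatched $c$, and a second Hall argument (each $c\in C$ has three incident edges) supplies distinct representatives for the remaining cover vertices. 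Taking $S=\{x_{r(c)}:c\in C\}$ and pairing each $c$ with $x_{r(c)}$ gives a perfect matching on $C\cup S$ with $|S|=|C|$; every $w\in I$ is dominated by $x_{e_w}\in S$, every remaining $x_e$ is dominated because $C$ covers $e$, and each $c\in C$ dominates itself. Hence $C\cup S$ is a PD-set of size $2|C|$, completing $\gamma_{pr}(G')=2\,\tau(G)$ and the NP-completeness proof for planar graphs of maximum degree at most $5$ (indeed at most $3$).
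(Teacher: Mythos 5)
Your proof is correct, and it takes a genuinely different route from the paper's. The paper replaces every vertex $v_i$ of the planar cubic graph by a many-vertex gadget $G_{v_i}$ and proves the identity $\gamma_{pr}(G') = 4n + 2\beta(G)$; the hard direction there is a lengthy normalization argument (Cases 1--3) showing that a minimum PD-set can be rearranged so that every gadget contributes exactly $4$ or at least $6$ vertices, from which a vertex cover is read off. You instead subdivide every edge once and prove $\gamma_{pr}(G') = 2\tau(G)$. Your lower bound is structural and very clean: the subdivision graph is bipartite between original and subdivision vertices, so a perfect matching on $G'[D]$ forces $\vert D \cap V\vert = \vert D \setminus V\vert$, and $D \cap V$ is necessarily a vertex cover (a subdivision vertex in $D$ can only be matched to an endpoint of its edge), giving $\vert D\vert \geq 2\tau(G)$. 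Your upper bound replaces the paper's explicit gadget pairing with two Hall-type arguments: first matching the independent set $I = V \setminus C$ into $C$, then choosing distinct representative edges for the remaining cover vertices; the observation that no edge $e_w=\{w,\mu(w)\}$ is incident to an unmatched cover vertex is exactly what makes the two systems compatible, and it holds since such an edge has one endpoint in $I$ and its other endpoint matched. What each approach buys: yours is shorter, needs only Hall's theorem plus a parity observation, and in fact proves a stronger statement --- NP-completeness for planar graphs of maximum degree $3$, which yields the stated degree-$5$ result a fortiori; the paper's gadget construction is self-contained case analysis that avoids invoking matching theory, but only achieves maximum degree $5$. Two minor polish points: connectivity of $G$ is never used in your argument, so that hypothesis can be dropped; and the second Hall argument deserves one explicit line --- each edge is incident to at most two unmatched cover vertices, so any set $U$ of them meets at least $3\vert U\vert/2 \geq \vert U\vert$ distinct edges.
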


\begin{proof}
Clearly, the \textsc{Decide PD-set} problem is in NP. To show the hardness of the problem, we give a reduction from \textsc{Min-VC} problem which is  NP-hard for planar cubic graphs, by Theorem~\ref{vc}. Let $G=(V,E)$ be a planar cubic graph with $V=\{v_1,v_2, \ldots, v_n\}$. We transform the graph $G$ into a  graph $G'=(V',E')$ as follows:
\begin{itemize}
\item replace each vertex $v_i \in V$ with the gadget $G_{v_i}$ as shown in the Fig.~\ref{p:planar}
 \item If three edges $e_{j},e_{k},e_{l}$ were incident on $v_{i}$ in $G$, then in $G'$, we make $e_{j}$ incident on $v_{i}^{1}$, $e_{k}$ incident on $v_{i}^{2}$ and $e_{l}$ incident on $v_{i}^{3}$.
\end{itemize}

\begin{figure}[ht]
\begin{center}
\includegraphics[width=0.8\textwidth]{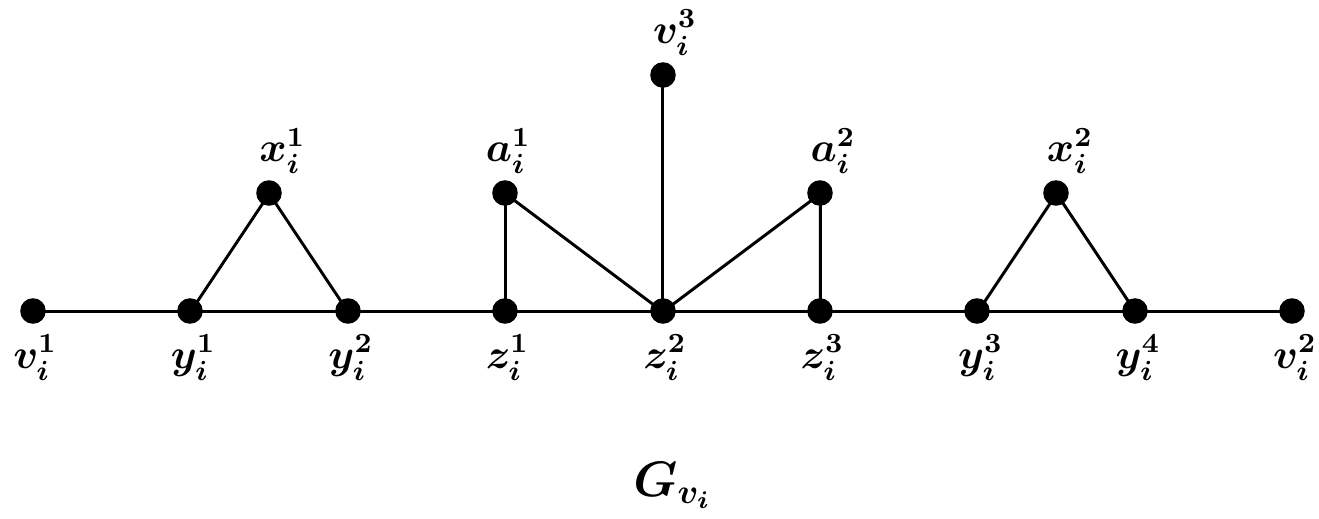}
\caption{Gadget $G_{v_i}$ used in the construction of graph $G'$ from $G$ in Theorem~\ref{planar}.}
\label{p:planar}
\end{center}
\end{figure}

We note that the graph $G'$ is a planar graph with maximum degree $5$, and $G'$ can be computed from $G$ in polynomial time. Now, to prove the result we only need to prove the following claim:

\begin{claim}
If $\beta(G)$ denotes the cardinality of a minimum vertex cover of $G$, then $\gamma_{pr}(G') = 4n + 2\beta(G)$, where $n$ denotes the number of vertices in $G$.
\end{claim}
\begin{proof}
Let $V^c$ be a minimum cardinality vertex cover of $G$. Let $D_p = \{v_{i}^{1}, y_{i}^{1}, v_{i}^{2}, y_{i}^{4},$ $v_{i}^{3}, z_{i}^{2} \mid v_i \in V^c\} \cup  \{y_{i}^{2}, z_{i}^{1}, y_{i}^{3}, z_{i}^{3} \mid v_i \notin V^c\}$ where $i \in [n]$. Note that if $v_{i} \notin V^{c}$, then all the three vertices adjacent to $v_{i}$ in $G$ must be present in $V^{c}$. Using this fact, it can be easily verified that $D_p$ is a PD-set of $G'$, and  $|D_p| = 6 \cdot \beta(G) + 4 \cdot (n-\beta(G))= 4n+ 2\beta(G)$. Therefore, if $D_{p}^{*}$ is a minimum cardinality PD-set of $G'$ then $|D_{p}^{*}| \le 4n + 2\beta(G)$. Hence, we have
\begin{equation}
\label{Eq1}
\gamma_{pr}(G') \le 4n + 2\beta(G)
\end{equation}

Conversely, suppose $D_{p}$ is a minimum cardinality PD-set of $G'$. Then, to dominate the vertex $x_{i}^{1}$, $D_p \cap \{x_{i}^{1}, y_{i}^{1}, y_{i}^{2}\}$ must be non-empty. Further, a vertex $ u \in \{x_{i}^{1}, y_{i}^{1}, y_{i}^{2}\}\cap D_{p}$ can  only be paired with a vertex in the set $\{v_{i}^{1},x_{i}^{1}, y_{i}^{1}, y_{i}^{2}, z_{i}^{1}\} \setminus \{u\}$. Hence, $|D_{p} \cap \{v_{i}^{1},x_{i}^{1}, y_{i}^{1}, y_{i}^{2}, z_{i}^{1}\} | \ge 2$. Similarly, we have $|D_{p} \cap \{v_{i}^{3},x_{i}^{3}, y_{i}^{3}, y_{i}^{4}, z_{i}^{3}\} | \ge 2$. Therefore, for each $i \in [n]$, we have $|D_{p} \cap V(G_{v_i})| \ge 4$. Note that to dominate $x_{i}^{1}$, $D_p \cap \{x_{i}^{1}, y_{i}^{1}, y_{i}^{2}\} \neq \emptyset$. Further, to dominate $a_{i}^{1}$, $D_p \cap \{z_{i}^{1}, z_{i}^{2}, a_{i}^{1}\} \neq \emptyset$. Similarly, to dominate $x_{i}^{2}$ and $a_{i}^{2}$, $D_p \cap \{x_{i}^{2}, y_{i}^{3}, y_{i}^{4}\} \neq \emptyset$ and $D_p \cap \{z_{i}^{3}, z_{i}^{2}, a_{i}^{2}\} \neq \emptyset$ respectively. Therefore, we observe that, if $|D_{p} \cap V(G_{v_i})| = 4$, then $D_{p} \cap V(G_{v_i}) = \{y_{i}^{2},z_{i}^{1},y_{i}^{3},z_{i}^{3}\}$.

Now, we prove that we can update $D_{p}$ such that $D_{p}$ remains a minimum cardinality PD-set of $G'$ and  for each $i \in [n]$, $|D_{p} \cap V(G_{v_i})| = 4$ or $|D_{p} \cap V(G_{v_i})| \ge 6$.  Suppose $|D_{p} \cap V(G_{v_i})| = 5$ for some $i \in [n]$. As we observed, the vertices dominating $x_{i}^{1}$ and $x_{i}^{2}$ are paired with the vertices of $V(G_{v_i})$, and $|D_{p} \cap V(G_{v_i})| \ge 4$. Hence if $|D_{p} \cap V(G_{v_i})| = 5$ then $D_{p} \cap \{v_{i}^{1}, v_{i}^{2}, v_{i}^{3}\} \neq \emptyset$, as only these vertices of the gadget $G_{v_i}$ can be paired with a vertex of another gadget.

\noindent \textbf{Case 1:} Suppose $v_{i}^{1} \in D_{p}$.

In this case, first we show that $D_{p} \cap \{ v_{i}^{2}, v_{i}^{3}\} = \emptyset$. Note that $v_{i}^{1}$ is paired with a vertex of some other gadget, and  $v_{i}^{1}$ is not dominating $x_{i}^{1}$. Further, if $u$ is the vertex dominating vertex $x_{i}^{1}$ then $u$ can only be paired with a vertex in the set $\{x_{i}^{1}, y_{i}^{1}, y_{i}^{2}, z_{i}^{1}\} \setminus \{u\}$. Therefore, $|D_{p} \cap \{v_{i}^{1},x_{i}^{1}, y_{i}^{1}, y_{i}^{2}, z_{i}^{1}\} | \ge 3$. Also as $|D_{p} \cap \{v_{i}^{2},x_{i}^{2}, y_{i}^{3}, y_{i}^{4}, z_{i}^{3}\} | \ge 2$ and $|D_{p} \cap V(G_{v_i})| = 5$, we have $v_{i}^{3} \notin D_{p}$. Further, as $|D_{p} \cap \{v_{i}^{1},x_{i}^{1}, y_{i}^{1}, y_{i}^{2}, z_{i}^{1}\} | \ge 3$ therefore, $|D_{p} \cap \{v_{i}^{2},x_{i}^{2}, y_{i}^{3}, y_{i}^{4}, z_{i}^{3}\} | = 2$. Now, if $v_i^{2} \in D_{p}$ then $v_{i}^{2}$ is paired with $y_{i}^{4}$ this leaves the vertex $z_{i}^{3}$ undominated, a contradiction. Therefore, $v_{i}^{2} \notin D_{p}$. This concludes that $D_{p} \cap \{ v_{i}^{2}, v_{i}^{3}\} = \emptyset$.

Now, let $v_{i}^{1}$ is paired with a vertex $u$ of another gadget, say $G_{v_j}$ where $i \neq j$. Note that $u \in \{v_{j}^{1},v_{j}^{2}, v_{j}^{3}\}$. It is easy to observe that $|D_{p} \cap V(G_{v_j})| \ge 5$. Suppose $v_{i}^{1}$ is paired with $v_{j}^{1}$. Now if  $y_{j}^{1} \notin D_{p}$ then update $D_{p}$ as follows: $D_{p} = D_{p} \setminus V(G_{v_i}) \cup \{y_{i}^{2}, y_{i}^{3}, z_{i}^{1}, z_{i}^{3}, y_{j}^{1}\}$ and pair $v_{j}^{1}$ with $y_{j}^{1}$. Now, suppose that $y_{j}^{1}$ already belongs to $D_{p}$. Note that $y_{j}^{1}$ is paired with either $y_{j}^{2}$ or $x_{j}^{1}$. If both $y_{j}^{2}$ and $x_{j}^{1} \in D_{p}$ then $y_{j}^{1}$ must be paired with $x_{j}^{1}$. In this case, the set $D_{p}' = D_{p} \setminus (V(G_{v_i}) \cup \{x_{j}^{1}\}) \cup \{y_{i}^{2}, y_{i}^{3}, z_{i}^{1}, z_{i}^{3}, y_{j}^{1}\}$ where $v_{j}^{1}$ is paired with $y_{j}^{1}$ is a PD-set of $G'$ and $|D_{p}'| < |D_{p}|$, a contradiction. Therefore, in this case either $y_{j}^{2} \notin D_{p}$ or $x_{j}^{1} \notin D_{p}$. If $x_{j}^{1} \notin D_{p}$ then $y_{j}^{1}$ is paired with $y_{j}^{2}$ and in this case, we update $D_p$ as follows: $D_{p} = D_{p} \setminus V(G_{v_i}) \cup \{y_{i}^{2}, y_{i}^{3}, z_{i}^{1}, z_{i}^{3}, x_{j}^{1}\}$, pair $v_{j}^{1}$ with $y_{j}^{1}$ and $y_{j}^{2}$ with $x_{j}^{1}$. We can update $D_{p}$ in a similar way if $y_{j}^{2} \notin D_{p}$.

Similarly we can update $D_{p}$ if $v_{i}^{1}$ is paired with $v_{j}^{2}$. Now suppose $v_{i}^{1}$ is paired with $v_{j}^{3}$. If $z_{j}^{2} \notin D_{p}$ then update $D_{p}$ as follows: $D_{p} = D_{p} \setminus V(G_{v_i}) \cup \{y_{i}^{2}, y_{i}^{3}, z_{i}^{1}, z_{i}^{3}, z_{j}^{2}\}$ and pair $v_{j}^{3}$ with $z_{j}^{2}$. But, if $z_{j}^{2} \in D_{p}$, we may observe that it is possible to update $D_{p}$ by giving similar arguments as above with suitable modifications, such that $v_{j}^{3}$ is paired with $z_{j}^{2}$. After update in each case, we may note that $|D_{p} \cap V(G_{v_i})| =4$ and $|D_{p} \cap V(G_{v_j})| \ge 6$.

\smallskip
\noindent \textbf{Case 2:} Suppose $v_{i}^{2} \in D_{p}$.

The arguments are similar to Case $1$.

\smallskip
\noindent \textbf{Case 3:} Suppose $v_{i}^{3} \in D_{p}$.

Let $v_{i}^{3}$ is paired with a vertex $u$ of another gadget $G_{v_j}$. Since, $|D_{p} \cap V(G_{v_i})| = 5$ we have $|D_{p} \cap \{v_{i}^{1},x_{i}^{1}, y_{i}^{1}, y_{i}^{2}, z_{i}^{1}\} | = 2$ and $|D_{p} \cap \{v_{i}^{3},x_{i}^{3}, y_{i}^{3}, y_{i}^{4}, z_{i}^{3}\} | = 2$. Now, if $v_i^{1} \in D_{p}$ then $v_{i}^{1}$ is paired with $y_{i}^{1}$ this leaves the vertex $z_{i}^{1}$ undominated, a contradiction. Similarly, if $v_i^{2} \in D_{p}$ then $v_{i}^{2}$ is paired with $y_{i}^{4}$ this leaves the vertex $z_{i}^{3}$ undominated, a contradiction. Hence, $D_{p} \cap \{v_{i}^{1}, v_{i}^{2}\} = \emptyset$.
Now we can give similar arguments as Case 1, to show that $D_{p}$ can be updated such that $|D_{p} \cap V(G_{v_i})| =4$ and $|D_{p} \cap V(G_{v_j})| \ge 6$.

Now, without loss of generality, we may assume that there exists a minimum cardinality PD-set of $G'$ such that for each $i \in [n]$, $|D_{p} \cap V(G_{v_i})| = 4$ or $|D_{p} \cap V(G_{v_i})|  \ge 6$

Define $V^{c} = \{v_i \in V \mid |D_{p} \cap V(G_{v_i})| \ge 6\}$. Next, we claim that $V^{c}$ is a vertex cover of $G$. Consider any two distinct vertices $v_i$ and $v_j$ in $G$ such that $v_iv_j \in E(G)$. We prove that either $|D_{p} \cap V(G_{v_i})| \ge 6$ or $|D_{p} \cap V(G_{v_j})| \ge 6$. Let $v_i^{k}$ is made adjacent  to $v_j^{k'}$, where $k,k' \in [3]$. Note that if $|D_{p} \cap V(G_{v_i})| = 4$ and $|D_{p} \cap V(G_{v_i})| = 4$ then from above observation, we have $D_{p} \cap V(G_{v_i}) = \{y_{i}^{2},z_{i}^{1},y_{i}^{3},z_{i}^{3}\}$ and $D_{p} \cap V(G_{v_j}) = \{y_{j}^{2},z_{j}^{1},y_{j}^{3},z_{j}^{3}\}$, this leaves the vertices $v_i^{k}$ and $v_j^{k'}$ undominated, a contradiction. Therefore, $V^{c}$ is a vertex cover of $G$. Also, $\gamma_{pr}(G')\geq 6|V^{c}|+4(n-|V^{c}|)$. So, we have $2|V^c| \le \gamma_{pr}(G')-4n$. Hence,
\begin{equation}
\label{Eq2}
2\beta(G) \le \gamma_{pr}(G')- 4n
\end{equation}

Therefore, using Equation~\ref{Eq1} and \ref{Eq2}, we have $\gamma_{pr}(G') = 4n + 2 \beta(G)$. This proves the claim.
\end{proof}

Since, the \textsc{Min-VC} problem is NP-hard for cubic planar graphs, from above claim we conclude that the \textsc{Decide PD-set} problem is NP-complete for planar graphs with maximum degree $5$.
\end{proof}

\section{Concluding Remarks}
\label{sec6}
In this paper, we resolve the complexity of the \textsc{Min-PD} problem for planar graphs and AT-free graphs. We proposed a polynomial time algorithm for \textsc{Min-PD} problem in AT-free graphs. We also proposed a $2$-approximation algorithm to compute a PD-set in AT-free graphs. Since the class of AT-free graphs include the class of cocomparability graphs, the results and algorithms presented for paired domination in AT-free graphs, also holds for cocomparability graphs. We further investigated the computational complexity of the problem in planar graphs and proved that the problem is NP-hard. The complexity of the problem is still not known in circle graphs. One may be interested in investigating the complexity status of the \textsc{Min-PD} problem in circle graph. Further, it is interesting to design more efficient algorithm for the problem in AT-free graphs and cocomparability graphs.


\begin{thebibliography}{99}
\bibitem{at}
D. G. Corneil, S. Olariu, L. Stewart. Asteroidal triple-free graphs. SIAM J. Discrete Math. \textbf{10} (1997) 399--430.

\bibitem{at3}
D. G. Corneil, S. Olariu, L. Stewart. A linear time algorithm to compute dominating pairs in asteroidal triple-free graphs. Proceedings of ICALP’95, Lecture Notes in Computer Science, vol. 944, Springer, Berlin, 292--302 (1995).

\bibitem{tdom4}
D. G. Corneil,  L. K. Stewart. Dominating sets in perfect graphs. Discrete Math. \textbf{86}(1–3), 145--164 (1990).

\bibitem{paired3}
L. Chen, C. Lu,  Z. Zeng. Labelling Algorithms for paired domination problems in block and interval graphs. J. Comb. Optim., \textbf{19}(4), 457--470 (2008).

\bibitem{paired10}
L. Chen, C. Lu, Z. Zeng. A linear-time algorithm for paired-domination in strongly chordal graphs. Inf. Process Lett. \textbf{110}(1), 20--23 (2009).

\bibitem{paired4}
W. Desormeaux, M. A. Henning. Paired domination in graphs: A survey and recent results. Util Math, \textbf{94}, 101--166 (2014).

\bibitem{hhs3}
T. W. Haynes, S. T. Hedetniemi, M. A. Henning. Topics in Domination in Graphs, Volume 64. Springer, Cham, 2020.

\bibitem{hhs1}
T. W. Haynes, S. T. Hedetniemi, P. J. Slater. Fundamentals of Domination in
Graphs, volume 208. Marcel Dekker Inc., New York, 1998.

\bibitem{hhs2}
T. W. Haynes, S. T. Hedetniemi, P. J. Slater. Domination in graphs: Advanced
topics, volume 209. Marcel Dekker Inc., New York, 1998.

\bibitem{paired}
T. W. Haynes, P. J. Slater. Paired domination in graphs. Networks, 32, 199--206 (1998).

\bibitem{kloks1}
T. Kloks, D. Kratsch, H. M\"{u}ller. Approximating the bandwidth for asteroidal triple-free graphs. In Spirakis, P., ed.: Algorithms - ESA '95, Berlin, Heidelberg,
Springer Berlin Heidelberg (1995) 434--447.

\bibitem{at1}
D. Kratsch. Domination and total domination in asteroidal triple-free graphs. Discrete Appl. Math. \textbf{99}, 111--123 (2000).

\bibitem{paired7}
E. Lappas, S.D. Nikolopoulos, L. Palios. An $O(n)$-time algorithm for paired domination problem on permutation graphs. Eur. J. Combin. \textbf{34}(3), 593--608 (2013).

\bibitem{paired6}
C. C. Lin, H. L. Tu. A linear-time algorithm for paired-domination on circular arc graphs, Theor. Comput. Sci. \textbf{591}, 99--105 (2015).

\bibitem{pddh}
C. C. Lin, K. C. Ku, C. H. Hsu. Paired-Domination Problem on Distance-Hereditary Graphs, Algorithmica \textbf{82}, 2809--2840 (2020).

\bibitem{vc}
Bojan Mohar. Face Covers and the Genus Problem for Apex Graphs, J. Comb. Theory Ser. A.  \textbf{82}, 102--117 (2001).

\bibitem{paired5}
B. S. Panda, D. Pradhan. Minimum paired-dominating set in chordal bipartite graphs and perfect elimination bipartite graphs, J. Comb. Optim. \textbf{26}  770--785 (2013).


\end{thebibliography}

\end{document}